\newtheorem{theorem}{Theorem}
\newtheorem{definition}{Definition}
\newtheorem{remark}{Remark}
\numberwithin{equation}{section} \setcounter{tocdepth}{1}
\begin{document}
\begin{center}
\bf \large {Weakly periodic Gibbs measures of the Ising model on the Cayley tree of order five and six}
\end{center}

\begin{center} Nasir Ganikhodjaev \footnote{Department of Computational and Theoretical Sciences, Faculty of Science, International
Islamic University Malaysia P.O. Box, 141, 25710, Kuantan. E-mail: gnasir@iium.edu.my}, Muzaffar Rahmatullaev\footnote{Institute of mathematics at the National University of Uzbekistan,
29, Do'rmon Yo'li str., 100125, Tashkent, Uzbekistan. E-mail: mrahmatullaev@rambler.ru},  Mohd Hirzie Bin Mohd Rodzhan\footnote{Department of Computational and Theoretical Sciences, Faculty of Science, International
Islamic University Malaysia P.O. Box, 141, 25710, Kuantan. E-mail: mohdhirzie@iium.edu.my}.\end{center}

\begin {abstract} For Ising model on the Cayley tree of order five and six we present new weakly periodic (non-periodic) Gibbs
measures corresponding to normal subgroups of indices two in the
group representation of the  Cayley tree.\end {abstract}

\textbf{Key words:} Cayley tree, Gibbs measure, Ising model,
weakly periodic measure.

\begin{center}
\section {Introduction}
\end{center}

A Gibbs measure is a mathematical idealization of an equilibrium state of a physical system which consists of a very large number of interacting components.
In the language of Probability Theory, a Gibbs measure is simply the distribution of a stochastic process which, instead of being indexed by the time, is parametrized by the sites of a spatial lattice, and has the special feature of admitting prescribed versions of the conditional distributions  with respect to the configurations outside finite regions. The physical phenomenon of phase transition is reflected by the non-uniqueness of the Gibbs measures for considered model.
The Ising model is realistic enough to exhibit this non-uniqueness of Gibbs measures in which a phase transition is predicted by physics. This fact is one of the main reasons for the physical interest in Gibbs measures. The problem of non-uniqueness, and also the converse problem of uniqueness, are central themes of the theory of Gibbs measures.

Let $M(H)$ be the set of all Gibbs measures defined by Hamiltonian $H.$ Note that this set contain translation-invariant Gibbs measures, periodic Gibbs measures and non-periodic Gibbs measures and one can consider the problem of phase transition  in the classes  translation-invariant Gibbs measures, periodic Gibbs measures and non-periodic Gibbs measures respectively. In [1-5]
have been studied the translational invariant Gibbs measures of the Ising model and some its generalization on the Cayley tree.
The papers [6-8] devoted to study  periodic Gibbs measures with period 2 for models with finite radius of interaction.
In [9-12] the authors introduced new class of Gibbs measures, so-called weakly periodic Gibbs measures and proved the existence of such measures
for the Ising model on the Cayley tree. In the papers [1, 12, 13] have been constructed continuum sets of non periodic
Gibbs measures for the Ising model on the Cayley tree. In [9, 10, 14] the authors considered non-periodic weakly periodic Gibbs measures for the Ising
model on the Cayley tree of order $k<5.$  The present paper is continuation of investigations in [14] and in this paper we study weakly periodic Gibbs measures
on the Cayley tree of order five and six.

\section {Basic definitions and formulation of the problem}

Let $\Gamma^k=(V,L), k\geq 1$, be the Cayley tree of order $k$,
i.e. an infinity graph every vertex of which is incident to
exactly $k+1$ edges. Here $V$ is the set of all vertices, $L$ is
the set of all edges of the tree $\Gamma^k.$ It is known that
$\Gamma^k$ can be represented as a non-commutative group $G_k$, which is the free product of
$k+1$ cyclic groups of the second order \cite{11}.

For an arbitrary point $x^0\in V$ we set  $ W_n=\{x\in V|
d(x^0,x)=n\}$, $V_n=\bigcup\limits_{m=0}^nW_m,$ $L_n=\{<x,y>\in L|
x,y\in V_n\}$, where $d(x,y)$ is the distance between the vertices
$x$ and $y$ in the Cayley tree, i.e. the number of edges in the
shortest path joining the vertices $x$ and $y$.
We write $x\prec y$ if the path from $x^0$ to $y$ goes through $x$.
We call the vertex $y$ a direct successor of $x$, if $y\succ x$ and
$x,y$ are nearest neighbours. The set of the direct successors of
$x$ is denoted by $S(x)$, i.e., if $x\in W_n$ , then
$$S(x)=\{y_i\in W_{n+1}| d(x,y_i)=1, i=1,2,\cdots,k\}.$$
Let $\Phi=\{-1,1\}$ and let $\sigma\in \Omega=\Phi^V$ be a
configuration, i.e. $\sigma=\{\sigma(x)\in \Phi: x\in V\}$. For subset $A\subset V$ we denote by $\Omega_A$ the space of all configurations
defined on the set $A$ and taking values in $\Phi$.

We consider the Hamiltonian of the Ising model:
\begin{equation}\label{1}
H(\sigma)=-J\sum _{<x,y>\in L}\sigma(x) \sigma(y),
\end{equation}
where $J\in R$, $\sigma(x)\in \Phi$ and $ <x,y>$ are nearest
neighbors.

For every $n$, we define a measure
$\mu_n$ on $\Omega_{V_n}$ setting
\begin{equation}\label{2}
\mu_{n}(\sigma_n)=Z^{-1}_n \exp\{-\beta H(\sigma_n)+\sum_{x\in
W_n}h_x \sigma (x)\},
\end{equation} where $h_x\in R, x\in V,$
$\beta=\frac{1}{T}$ ($T$ is temperature, $T>0$),
$\sigma_{n}=\{\sigma(x),x\in V_n\}\in \Omega_{V_n}$, $Z^{-1}_n $
is the normalizing factor, and
$$H(\sigma_n)=-J\sum_{<x,y>\in
L_n}\sigma (x) \sigma (y).$$ The compatibility condition for the
measures $\mu_n(\sigma_n), n\geq 1,$ is
\begin{equation}\label{3}
\sum_{\sigma^{(n)}}\mu_n(\sigma_{n-1},\sigma^{(n)})=\mu_{n-1}(\sigma_{n-1}),
\end{equation}
where $\sigma^{(n)}=\{\sigma(x),x\in W_n\}.$

Let $\mu_n, n\geq1$ be a sequence of measures on the sets
$\Omega_{V_n}$ that satisfy compatibility condition (\ref{3}). By
the Kolmogorov theorem, we then have a unique limit measure $\mu$
on $\Omega_{V}=\Omega$ (called the limit Gibbs measure) such that
$$\mu(\sigma_n)=\mu_n(\sigma_n)$$ for every $n=1,2,...$. It is
known that measures (\ref{2}) satisfies the condition (\ref{3}) if
and only if the set  $h=\{h_x, x\in G_k \}$ of quantities
satisfies the condition
\begin{equation}\label{4}h_x=\sum_{y \in S(x)}f(h_y,\theta), \end{equation}
where $S(x)$ is the set of direct successors of the
vertex  $x\in V$ (see \cite{1}, \cite{2},\cite{3}). Here,
$f(x,\theta)=arctanh(\theta \tanh x)$, $\theta=\tanh(J\beta)$,
 $\beta=\frac{1}{T}.$

Let $G_k/\widehat{G}_k=\{H_1,...,H_r\}$  be a factor group, where
$\widehat{G}_k$ is a normal subgroup of index $r\geq 1$.

\begin{definition}\label{per} A set  $h=\{h_x,\, x\in G_k\}$ of quantities
is called $\widehat{G}_k$-\textit{periodic} if $h_{xy}=h_{x}$, for
all $x\in G_k$ and  $y\in \widehat{G}_k$.
\end{definition}

For $ x\in G_k $ we denote by $x_{\downarrow}$ the unique point of
the set $\{y\in G_k:\langle x,y\rangle\}\setminus S(x)$.

\begin{definition}\label{wp} A set of quantities $h=\{h_x,\, x\in G_k\}$ is called
 $\widehat{G}_k$-\textit{weakly periodic}, if
$h_x=h_{ij}$, for any $x\in H_i, x_{\downarrow}\in H_j$.
\end{definition}

We note that the weakly periodic $h$ coincides with an ordinary
periodic one (see Definition \ref{per}) if the quantity $h_x$ is
independent of $x_{\downarrow}$.

\begin{definition}\label{wpg} A Gibbs measure $\mu$ is said to be $\widehat{G}_k$-(weakly)
periodic if it corresponds to the $\widehat{G}_k$-(weakly)
periodic $h$. We call a $G_k$-periodic measure a
translation-invariant measure.
\end{definition}

In this paper, we study weakly periodic Gibbs measures and
demonstrate that such measures exist for the Ising model on a
Cayley tree of order five and six.

\section {Weakly periodic measures}

The level of difficulty in the describing of weakly periodic Gibbs
measures is related to the structure and index of the normal
subgroup relative to which the periodicity condition is imposed.
It is known (see Chapter 1 of \cite{1}) that in the group $G_k$,
there is no normal subgroup of odd index different from one.
Therefore, we consider normal subgroups of even indices. Here, we
restrict ourself to the case of index two.

We describe $\widehat {G}_k$-weakly periodic Gibbs measures for
any normal subgroup $\widehat{G}_k$ of index two. We note (see
Chapter 1 of \cite{1}) that any normal subgroup of index two of
the group $G_k$ has the form
$$H_A=\left\{x\in G_k:\sum\limits_{i\in A}\omega_x(a_i){\rm -even} \right\},$$
where $\emptyset \neq A\subseteq N_k=\{1,2,\dots,k+1\}$, and
$\omega_x(a_i)$ is the number of letters $a_i$ in a word $x\in
G_k$.

Let $A\subseteq N_k$ and $H_A$ be the corresponding normal
subgroup of index two. We note that in the case $|A|=k+1$, i.e.,
in the case $A = N_k$, weak periodicity coincides with ordinary
periodicity. Therefore, we consider $A\subset N_k$ such that $A\ne
N_k$. Then, in view of (\ref{4}), the $H_A$-weakly periodic set of
$h$ has the form
\begin{equation}\label{5}
h_x=\left\{%
\begin{array}{ll}
    h_{1}, & {x \in H_A, \ x_{\downarrow} \in H_A}, \\[2mm]
    h_{2}, & {x \in H_A, \ x_{\downarrow} \in G_k \backslash H_A}, \\[2mm]
    h_{3}, & {x \in G_k \backslash H_A, \ x_{\downarrow} \in H_A}, \\[2mm]
    h_{4}, & { x \in G_k \backslash H_A, x_{\downarrow}  \in G_k \backslash H_A,}
\end{array}%
\right.\end{equation}
  where $h_{i},i=1,2,3,4$, satisfy the
following equations:
\begin{equation}\label{6}
\left\{%
\begin{array}{ll}
    h_{1}=|A|f(h_{3},\theta)+(k-|A|)f(h_{1},\theta),\\[2mm]
    h_{2}=(|A|-1)f(h_{3},\theta)+(k+1-|A|)f(h_{1},\theta),\\[2mm]
    h_{3}=(|A|-1)f(h_{2},\theta)+(k+1-|A|)f(h_{4},\theta),\\[2mm]
    h_{4}=|A|f(h_{2},\theta)+(k-|A|)f(h_{4},\theta).
\end{array}%
\right.\end{equation}

Consider operator $W:R^4\rightarrow R^4$, defined by

\begin{equation}\label{7}
\left\{%
\begin{array}{ll}
    h'_{1}=|A|f(h_{3},\theta)+(k-|A|)f(h_{1},\theta) \\
    h'_{2}=(|A|-1)f(h_{3},\theta)+(k+1-|A|)f(h_{1},\theta) \\
    h'_{3}=(|A|-1)f(h_{2},\theta)+(k+1-|A|)f(h_{4},\theta) \\
    h'_{4}=|A|f(h_{2},\theta)+(k-|A|)f(h_{4},\theta).\\
\end{array}%
\right.\end{equation}

Note that the system of equations (\ref{6}) describes fixed points of the operator $W,$ i.e. $h=W(h)$.

It is obvious that the following sets are invariant with respect
to operator $W$:
$$ I_1 =\{h\in R^4: h_1=h_2=h_3=h_4\}, \ \ I_2 =\{h\in R^4:
h_1=h_4; h_2=h_3\},$$
$$
I_3 =\{h\in R^4: h_1=-h_4; h_2=-h_3\}.
$$

In \cite{7} it was proved that the system of equation (\ref{6}), on the invariant set $I_2$ has the solutions which belong to $I_1$. The system equation (\ref{6}) on the invariant set $I_1$ reduced to the following equation
\begin{equation}\label{A1}
h=kf(h, \theta).
\end{equation}
The solutions of (\ref{A1}) correspond to translation-invariant Gibbs measures. In this paper we will study weakly periodic (non-periodic, in particular non translation-invariant) Gibbs measures, i.e. we will investigate the fixed points of operator $W$ in the  invariant set $I_3$.

Let $\alpha=\frac{1-\theta}{1+\theta}.$ In \cite{7'} was proven the following statement.
\begin{theorem}\label{th1} Let $|A| =k,\alpha >1$.
\begin{enumerate}
 \item[1)] For $k\leq 3$ all $H_A$-weakly periodic Gibbs measures on $I_3$ are translational invariant.

\item[2)] For $k=4$ there exists a critical value $\alpha_{cr}(\approx�6.3716)$ such that for $\alpha<\alpha_{cr}$ on $I_3$ there exists
one $H_A$-weakly periodic Gibbs measure; for $\alpha=\alpha_{cr}$ on $I_3$ there exist three $H_A$-weakly periodic
Gibbs measures; for $\alpha>\alpha_{cr}$ on $I_3$ there exist five $H_A$-weakly periodic Gibbs measures.
 \end{enumerate}
 \end{theorem}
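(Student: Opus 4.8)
The plan is to restrict the operator $W$ to the invariant set $I_3$, to convert the resulting fixed‑point system into a single polynomial equation via the exponential substitution, and then to count its admissible roots as $\alpha$ varies. Using that $f(\cdot,\theta)$ is odd, on $I_3$ (so $h_1=-h_4$, $h_2=-h_3$) the system (\ref{6}) with $|A|=k$ (where $k-|A|=0$ and $k+1-|A|=1$) reduces to the two equations
\[
h_1=k\,f(h_3,\theta),\qquad h_2=(k-1)f(h_3,\theta)+f(h_1,\theta),
\]
with $h_3=-h_2$. Passing to $z_i=e^{2h_i}$ turns $h\mapsto f(h,\theta)$ into the fractional‑linear map $z\mapsto(z+\alpha)/(\alpha z+1)$ and the constraint $h\in I_3$ into $z_1z_4=z_2z_3=1$. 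Setting $c=(\alpha z_2+1)/(z_2+\alpha)$ one gets $z_1=c^k$ and $z_2=(\alpha c-1)/(\alpha-c)$, and eliminating $z_2$ and clearing denominators leaves the single equation
\[
P_k(c):=c^{2k}-\alpha c^{2k-1}+\alpha^2 c^{k+1}-\alpha^2 c^{k-1}+\alpha c-1=0,
\]
in which the translation‑invariant solution $h\equiv 0$ corresponds to $c=1$.

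Next I would exploit the structure of $P_k$. One checks $P_k(1)=P_k(-1)=0$ and $P_k(1/c)=-c^{-2k}P_k(c)$ (the latter encoding the symmetry $h\leftrightarrow-h$ of $I_3$), so $P_k(c)=(c-1)(c+1)R_k(c)$ with $R_k$ self‑reciprocal of even degree $2k-2$, and writing $R_k(c)=c^{k-1}\widehat R_k(c+1/c)$ reduces everything to the polynomial $\widehat R_k$ of degree $k-1$. A root $c$ is admissible — i.e. yields a genuine solution of (\ref{6}) on $I_3$ — exactly when it is real, positive, and lies in $(1/\alpha,\alpha)$, which is forced by $z_2=(\alpha c-1)/(\alpha-c)>0$ and is equivalent to $w:=c+1/c\in[2,\alpha+1/\alpha)$, the value $w=2$ returning the trivial $c=1$. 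Since $c$ and $1/c$ share the same $w$ but give the distinct measures $\mu_h$ and $\mu_{-h}$ (distinct because $h\neq0$), the number of $H_A$‑weakly periodic Gibbs measures on $I_3$ equals $1+2N_k(\alpha)$, where $N_k(\alpha)$ is the number of roots of $\widehat R_k$ lying in the open interval $(2,\alpha+1/\alpha)$.

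For $k\le 3$ the polynomial $\widehat R_k$ has degree at most two — explicitly $\widehat R_1$ is a nonzero constant, $\widehat R_2(w)=w+\alpha(\alpha-1)$ and $\widehat R_3(w)=w^2-\alpha w+\alpha^2-1$ — and for every $\alpha>1$ an elementary sign analysis shows all of its real roots are $<2$; hence $N_k(\alpha)=0$ and assertion 1) follows. For $k=4$ one computes $\widehat R_4(w)=w^3-\alpha w^2-2w+\alpha^2+\alpha$. I would then check $\widehat R_4(2)=\alpha^2-3\alpha+4>0$ and $\widehat R_4(\alpha+1/\alpha)=\alpha^2+\alpha^{-3}>0$, and identify the critical points $w_{\pm}=\tfrac13(\alpha\pm\sqrt{\alpha^2+6})$, with $w_+$ the unique local minimum; one has $w_+\in(2,\alpha+1/\alpha)$ precisely when $\alpha>5/2$. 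Therefore $N_4(\alpha)=0$ for $1<\alpha\le5/2$ (the cubic being positive on $[2,\infty)$ there), while for $\alpha>5/2$ we get $N_4(\alpha)\in\{0,1,2\}$ according as $\widehat R_4(w_+)>0$, $=0$, $<0$, the two roots (when present) being trapped inside $(2,\alpha+1/\alpha)$ by the positivity of $\widehat R_4$ at the endpoints. Because $\widehat R_4$ is also positive at its local maximum $w_-<0$ (indeed $\widehat R_4(w_-)>\widehat R_4(0)=\alpha^2+\alpha>0$), the equation $\widehat R_4(w_+)=0$ amounts to the vanishing of the discriminant $D(\alpha)=4\alpha^5-23\alpha^4-18\alpha^3+13\alpha^2+32$ of $\widehat R_4$; one then verifies $D(5/2)<0$, $D(\alpha)\to+\infty$, and that $D$ is eventually increasing with a single sign change on $(5/2,\infty)$, which pins down a unique $\alpha_{cr}$ there with $N_4=0$ for $5/2<\alpha<\alpha_{cr}$ and $N_4=2$ for $\alpha>\alpha_{cr}$; numerical root isolation gives $\alpha_{cr}\approx6.3716$, and the counts $1+2N_4\in\{1,3,5\}$ are exactly those in the statement.

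The step I expect to be the main obstacle is this last one: establishing that the bifurcation in $\alpha$ is a single fold, i.e. that among the roots of the quintic $D$ only one lies in the admissible range $(5/2,\infty)$ and carries the correct sign pattern. The polynomial $D$ is not monotone on $(5/2,\infty)$, and it also possesses a spurious real zero near $\alpha\approx1.07$ coming from the coalescence of a complex conjugate pair of $c$‑roots (there $w_+<2$), which has to be discarded by keeping strictly to the window $w\in(2,\alpha+1/\alpha)$. The clean trichotomy hinges on verifying that the genuine positive roots of $\widehat R_4$, once they appear, never leave this window — which follows from $\widehat R_4(2)>0$, $\widehat R_4(\alpha+1/\alpha)>0$ together with $w_+\in(2,\alpha+1/\alpha)$. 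The remaining ingredients — the collapse of (\ref{6}) on $I_3$, the factorization of $P_k$, and the one‑variable calculus for the cubic — are routine.
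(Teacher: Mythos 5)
Your proposal is correct. One point of orientation first: this paper does not actually prove Theorem \ref{th1} --- the statement is imported from \cite{7'} --- but the machinery the paper uses for Theorems \ref{th3} and \ref{th4} is exactly the reduction you carry out, so the comparison is still meaningful. Your $P_k(c)$ is literally the paper's equation (\ref{12}) (your $c$ is the reciprocal of the paper's $u$, which is immaterial since the polynomial is antipalindromic), your factorization by $c^2-1$ is (\ref{13}), and your substitution $w=c+1/c$ is the paper's $\xi=u+1/u$. I checked $\widehat R_2,\widehat R_3,\widehat R_4$, the values $\widehat R_4(2)=\alpha^2-3\alpha+4$ and $\widehat R_4(\alpha+1/\alpha)=\alpha^2+\alpha^{-3}$, the threshold $w_+>2\Longleftrightarrow\alpha>5/2$, and the discriminant $D(\alpha)=4\alpha^5-23\alpha^4-18\alpha^3+13\alpha^2+32$ with its root near $6.3716$; all are right, as is the count $1+2N_4\in\{1,3,5\}$. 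You diverge from the paper's style only in the final root count: the paper solves the degree-$(k-1)$ equation for $\alpha$ as two branches $\alpha_{1,2}(\xi)$ and studies their monotonicity in $\xi$, whereas you fix $\alpha$ and track the cubic in $w$ through its critical points; the two are equivalent, and your version has the merit of making explicit the admissibility window $w\in[2,\alpha+1/\alpha)$ forced by $z_2>0$ --- a constraint the paper never states for $k=5,6$ (here it costs nothing because $\widehat R_4>0$ at both endpoints, but it does have to be checked). The one step you should tighten is the uniqueness of the fold: ``$D$ is eventually increasing with a single sign change'' is an observation, not an argument. A clean replacement: on $(5/2,\infty)$ the sign of $D$ is opposite to that of $g(\alpha):=\widehat R_4(w_+(\alpha))$ (since the local maximum value $\widehat R_4(w_-)>\widehat R_4(0)>0$, as you note); because $\partial_w\widehat R_4(w_+)=0$ one gets $g'(\alpha)=2\alpha+1-w_+^2$, which changes sign exactly once on $(5/2,\infty)$ (the equation $2\alpha+1=w_+^2$ reduces to $8\alpha^3-32\alpha^2-12\alpha-1=0$, which by Descartes has a single positive root); hence $g$ rises from $g(5/2)=11/4>0$ and then decreases monotonically to $-\infty$, so it vanishes exactly once, at your $\alpha_{cr}\approx 6.3716$.
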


\begin{remark} Note that one of the  measures described in item 2) of Theorem \ref{th1}, is translation-invariant, but
the other measures are $H_A$-weakly periodic (non-periodic) and differ from measures considered in [9, 10].
\end{remark}

In Theorem \ref{th1} have been considered the cases with $k\leq 4$ (see \cite{7'}).  In this paper we consider the cases with $k\geq5$.

Using the fact that
$$f(h,\theta)={\rm arctanh}(\theta \tanh h)=\frac{1}{2}\ln\frac{(1+\theta)e^{2h}+(1-\theta)}{(1-\theta)e^{2h}+(1+\theta)},$$
and introducing the variables  $z_i=e^{2h_i} \  \ i=1,2,3,4$ one can transform the system of equations (\ref{6}) to the following:

\begin{equation}\label{8}
\left\{%
\begin{array}{ll}
    z_{1}=(\frac{z_3+\alpha}{\alpha z_3+1})^{|A|}\cdot(\frac{z_1+\alpha}{\alpha z_1+1})^{(k-|A|)} \\
    z_{2}=(\frac{z_3+\alpha}{\alpha z_3+1})^{|A|-1}\cdot(\frac{z_1+\alpha}{\alpha z_1+1})^{(k+1-|A|)}\\
    z_{3}=(\frac{z_2+\alpha}{\alpha z_2+1})^{|A|-1}\cdot(\frac{z_4+\alpha}{\alpha z_4+1})^{(k+1-|A|)} \\
    z_{4}=(\frac{z_2+\alpha}{\alpha z_2+1})^{|A|}\cdot(\frac{z_4+\alpha}{\alpha z_4+1})^{(k-|A|)}.\\
\end{array}%
\right.\end{equation}

\begin{theorem}\label{th2} Let $|A| =k.$ Then for arbitrary $k$
the number of $H_A$-weakly periodic (non-periodic) Gibbs measures which correspond to fixed points of operator $W$ on the invariant set $I_3$ does not exceed  four.
 \end{theorem}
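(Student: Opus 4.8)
The plan is to reduce the fixed-point equations of $W$ on $I_3$ to a single polynomial equation in one variable, and then to bound the number of its roots in a fixed bounded interval.

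\textbf{Step 1: reduction on $I_3$.} On $I_3$ one has $h_3=-h_2$, $h_4=-h_1$, and since $f(\cdot,\theta)$ is odd, the system (\ref{6}) with $|A|=k$ reduces to the two equations $h_1=-kf(h_2,\theta)$ and $h_2=f(h_1,\theta)-(k-1)f(h_2,\theta)$ (its third and fourth equations repeat these). Eliminating $h_1$ yields the single equation $F(h_2)=0$, where $F(x)=x+(k-1)f(x,\theta)+f(kf(x,\theta),\theta)$ is an odd function of $x$. Hence $x=0$ is always a solution — the translation-invariant measure — and the remaining solutions come in pairs $\pm x$. So the number of $H_A$-weakly periodic (non-periodic) Gibbs measures on $I_3$ equals $2N$, where $N$ is the number of strictly positive roots of $F$, and it suffices to prove $N\le 2$. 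When $\theta\ge 0$ the function $f(\cdot,\theta)$ is increasing, $F$ is strictly increasing, and $N=0$; so from now on $\theta<0$, i.e.\ $\alpha>1$.

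\textbf{Step 2: passage to a polynomial.} Put $z_i=e^{2h_i}$ and $q=e^{2h_1/k}$, so that $z_1=q^k$ and the involution $q\mapsto 1/q$ corresponds to $h_2\mapsto-h_2$. By (\ref{8}) with $|A|=k$, the equations on $I_3$ become $z_2=q^{k-1}\frac{q^{k}+\alpha}{\alpha q^{k}+1}=\frac{1-\alpha q}{q-\alpha}$, so $q$ must satisfy $q^{k-1}(q^{k}+\alpha)(q-\alpha)=(1-\alpha q)(\alpha q^{k}+1)$. The difference of the two sides equals $(q^{2}-1)Q(q)$ with
\[
Q(q)=\sum_{j=0}^{k-1}q^{2j}-\alpha\sum_{j=0}^{k-2}q^{2j+1}+\alpha^{2}q^{k-1},
\]
the factor $q-1$ being the trivial solution. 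The groupings $Q(q)=1+\sum_{j=1}^{k-1}q^{2j-1}(q-\alpha)+\alpha^{2}q^{k-1}$ and $Q(q)=\sum_{j=0}^{k-2}q^{2j}(1-\alpha q)+q^{2k-2}+\alpha^{2}q^{k-1}$ show $Q(q)>0$ for $q\ge\alpha$ and for $0<q\le 1/\alpha$, so every positive root of $Q$ lies in $(1/\alpha,\alpha)$; and since $q^{2k-2}Q(1/q)=Q(q)$, its roots in $(1/\alpha,1)$ and in $(1,\alpha)$ correspond under $q\mapsto 1/q$. Thus $N$ is exactly the number of roots of $Q$ in $(1,\alpha)$, and the theorem reduces to showing that $Q$ has at most two roots in $(1,\alpha)$.

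\textbf{Step 3: counting roots in $(1,\alpha)$.} Read $Q(q)=0$ on $(1,\alpha)$ as the intersection equation $\psi(q)=\lambda(q)$, where $\psi(q)=q^{k-1}\frac{q^{k}+\alpha}{\alpha q^{k}+1}$ and $\lambda(q)=\frac{1-\alpha q}{q-\alpha}$, and compare the graphs. On $(1,\alpha)$ the map $\lambda$ is strictly increasing and convex, with $\lambda(1)=1$ and $\lambda\to+\infty$ as $q\to\alpha^{-}$, while $\psi(1)=1$ and $\psi(\alpha)$ is finite. A short computation shows that $\psi'(q)$ has the sign of the quadratic $(k-1)\alpha Z^{2}+(2k-1-\alpha^{2})Z+(k-1)\alpha$ in $Z=q^{k}$; this quadratic has positive leading and constant coefficients, so it is positive throughout $Z>0$ unless $\alpha>2k-1$, in which case it is negative exactly on an interval $(Z_{-},Z_{+})$ with $Z_{-}Z_{+}=1$. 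Consequently $\psi$ is either monotone on $(1,\alpha)$, or has there a single critical point, a minimum at $q_{*}=Z_{+}^{1/k}\in(1,\alpha)$. Combining this with the convexity of $\lambda$ and the endpoint behaviour of $\psi-\lambda$, one localizes the possible crossings to one subinterval and finishes by checking that $(\psi-\lambda)'$ vanishes at most once there — equivalently that a certain lower-order polynomial in $q$ and $\alpha$ keeps its sign — so that Rolle's theorem bounds the number of crossings by two.

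\textbf{Expected obstacle.} Step 3 is the real content: since $\deg Q=2k-2$ grows with $k$, the bound ``at most two roots in $(1,\alpha)$'' cannot be read off the degree and must use the precise shape of $\psi$ relative to $\lambda$. Equivalently, writing $Q(q)=q^{k-1}R(q+q^{-1})$ with $R$ of degree $k-1$, one needs $R$ to have at most two roots in $(2,\alpha+\alpha^{-1})$; an alternative route is to send that interval to $(0,\infty)$ by a M\"obius substitution and apply Descartes' rule of signs to the resulting polynomial, using the explicit binomial coefficients of $R$. Making this uniform in $k$ and $\alpha$ is where essentially all the difficulty lies; Steps 1--2 are routine algebra, and for $k\le 4$ the statement is already covered by Theorem \ref{th1}.
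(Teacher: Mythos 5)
Your Steps 1--2 are sound and agree with the paper's reduction (your $Q$ is, up to the change of variable, exactly the quotient polynomial $P_{2k-2}$ of equation (\ref{13}); for $k=5$ it reproduces (\ref{16})). The genuine gap is Step 3: the assertion that $Q$ has at most two roots in $(1,\alpha)$ \emph{is} the theorem, and you do not prove it. The localization of the crossings of $\psi$ and $\lambda$ to ``one subinterval'' and the claim that $(\psi-\lambda)'$ vanishes at most once there are left as a programme (``a certain lower-order polynomial \dots keeps its sign''), and you yourself concede that making this uniform in $k$ and $\alpha$ is where all the difficulty lies. As written, the argument could just as well be attempted for a bound of $1$ or $3$ crossings; nothing in it pins down the number two. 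A curve-comparison/Rolle argument of this type would need the explicit sign analysis carried out, and there is no reason to expect it to be short, since $\psi$ depends on $k$ through $Z=q^k$ in a way that changes its monotonicity pattern as $\alpha$ crosses $2k-1$.

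The paper avoids all of this with one observation you came close to but applied in the wrong place: Descartes' rule of signs should be used on the polynomial \emph{before} dividing out $u^2-1$. Equation (\ref{12}),
\begin{equation*}
u^{2k}-\alpha u^{2k-1}+\alpha^2 u^{k+1}-\alpha^2 u^{k-1}+\alpha u-1=0,
\end{equation*}
has only six nonzero coefficients, with signs $+,-,+,-,+,-$ for every $k$ and every $\alpha>0$; hence exactly five sign changes and at most five positive roots, independently of $k$. Since $u=1$ is always a root and corresponds to the translation-invariant measure, at most four roots remain for the non-periodic measures. Your ``alternative route'' proposes Descartes only after passing to the quotient $R(q+q^{-1})$ and a M\"obius substitution, where the coefficients become $k$-dependent binomial sums and the sign pattern is no longer transparent --- that is precisely why your version looks hard. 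Applied to the unfactored equation, the sparsity of (\ref{12}) does all the work, and the proof is two lines.
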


 \begin{proof}

Let $|A|=k$.  Then the system of equations
(\ref{8}) has the form
\begin{equation}\label{9}
\left\{%
\begin{array}{ll}
    z_{1}=\left(f(z_3)\right)^{k}\\
    z_{2}=\left(f(z_3)\right)^{k-1}\cdot \left(f(z_1)\right)\\
    z_{3}=\left(f(z_2)\right)^{k-1}\cdot \left(f(z_4)\right)\\
    z_{4}=\left(f(z_2)\right)^{k},
\end{array}%
\right.\end{equation}
where
$f(x)=\frac{x+\alpha}{\alpha x+1}$.
The system of equations (\ref{9}) on the
invariant set $I_3$ has the following form:
\begin{equation}\label{10}
\left\{%
\begin{array}{ll}
    z_{1}=\left(f(\frac{1}{z_2})\right)^{k}\\
    z_{2}=\left(f(\frac{1}{z_2})\right)^{k-1}\cdot \left(f(z_1)\right)
\end{array}%
\right.\end{equation}

and it can be transformed to the following
equation
\begin{equation}\label{11}
z_2=\left(\frac{1+\alpha
z_2}{\alpha+z_2}\right)^{k-1}\frac{\alpha(\alpha+z_2)^k+(1+\alpha
z_2)^k}{(\alpha+z_2)^k+\alpha(1+\alpha z_2)^k}.\end{equation}
Assuming $u=f(z_2)$ we reduce the equation
(\ref{11}) to the equation
\begin{equation}\label{12}
u^{2k}-\alpha u^{2k-1}+\alpha^2 u^{k+1}-\alpha^2 u^{k-1}+\alpha
u-1=0. \end{equation}

According Descartes' rule of signs (see for example \cite{Pra}) the number of positive roots of the polynomial (\ref{12}) is either equal to the number of sign differences between consecutive nonzero coefficients, or is less than it by an even number. Therefore, the equation (\ref{12}) has at most five positive solutions. It is easy to verify that this equation (\ref{12}) is factorized as follows:
\begin{equation}\label{13}
(u^2-1)P_{2k-2}(u)=0, \end{equation}
where $P_{2k-2}(u)$ is a
polynomial of degree $2k-2$. Since one of the roots of (\ref{12}) is $u=1$ which corresponds to translational-invariant Gibbs measure,
 the number of $H_A$-weakly periodic (non-periodic) Gibbs measures does not exceed of four.
\end{proof}
\begin{remark}
In general, the total number of $H_A$-weakly periodic (non-periodic) Gibbs measures (considered everywhere, not only on the invariant set $I_3$) may be greater than four.
 \end{remark}
Recall that a polynomial  $P=\sum_{i=0}^n a_i x^i$ of degree $n$ , is called palindromic (antipalindromic)  if $a_i = a_{n - i}$ (respectively  $a_i = -a_{n - i}$ ) for $i = 0, 1,\cdots, n.$
Note that the polynomial (13) is antipalindromic. It is known that if antipalindromic polynomial of even degree is a multiple of $x^2-1$ (it has -1 and 1 as a roots)
 then its quotient by $x^2-1$  is palindromic (see for example  \cite{Pra}).

\begin{theorem}\label{th3} Let $|A|=k, k=5.$ \\
For the weakly periodic Gibbs measures
corresponding to the set of quantities from  $I_3$ there exists a
critical value $\alpha_{cr}(\approx 2,65)$ such that there is not
any $H_A-$ weakly periodic (nonperiodic)  Gibbs
measures for $0<\alpha<\alpha_{cr}$, there are two $H_A-$ weakly
periodic (nonperiodic)  Gibbs measures for $\alpha=\alpha_{cr}$,
 and there are four $H_A-$ weakly periodic (nonperiodic) Gibbs measures for $\alpha_{cr}<\alpha$.
 \end{theorem}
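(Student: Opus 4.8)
\emph{Proof proposal.} The plan is to reduce equation (\ref{12}) at $k=5$ to an explicit quartic in one variable, whose roots in a prescribed window biject with pairs of the sought Gibbs measures, and then to run a one‑parameter bifurcation analysis. First I would set $k=5$ in (\ref{12}) and divide by $u^{2}-1$ as in (\ref{13}); the quotient is the palindromic octic
\[
P_{8}(u)=u^{8}-\alpha u^{7}+u^{6}-\alpha u^{5}+(\alpha^{2}+1)u^{4}-\alpha u^{3}+u^{2}-\alpha u+1 .
\]
Since $P_{8}$ is palindromic, the substitution $v=u+u^{-1}$ turns $P_{8}(u)=0$ into
\[
Q(v):=v^{4}-\alpha v^{3}-3v^{2}+2\alpha v+\alpha^{2}+1=0 ,
\]
because $P_{8}(u)/u^{4}=(v^{4}-4v^{2}+2)-\alpha(v^{3}-3v)+(v^{2}-2)-\alpha v+(\alpha^{2}+1)$. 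I would then record the dictionary: a root $v>2$ of $Q$ gives a reciprocal pair $\{u,u^{-1}\}$ of positive roots of $P_{8}$; each such $u$ yields, via $z_{2}=\frac{u-\alpha}{1-\alpha u}$, $z_{1}=u^{-k}$ and the relations defining $I_{3}$, a solution $h\in I_{3}$ of (\ref{6}), while $u$ and $u^{-1}$ produce the pair $h,-h$, hence two $H_{A}$‑weakly periodic (non‑periodic) Gibbs measures; moreover $u\neq 1$ forces $z_{1}\neq z_{2}$, so none of these is $H_{A}$‑periodic, and $v=2$ (i.e. $u=1$) is exactly the translation‑invariant one, which is excluded. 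Conversely $v\le 2$ gives no real $u>0$, and —as the next step shows— every root $v>2$ of $Q$ automatically satisfies $v<\alpha$, so all $z_{i}>0$ and the dictionary is exhaustive. Thus the number of measures asked for equals $2N(\alpha)$, where $N(\alpha)=\#\{v>2:Q(v)=0\}$.

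To count $N(\alpha)$ I would solve $Q(v)=0$ for $\alpha$ (a quadratic in $\alpha$): on the locus $Q=0$ one has $\alpha=\alpha_{\pm}(v):=\tfrac12\big[(v^{3}-2v)\pm\sqrt{D(v)}\big]$ with $D(v)=v^{2}(v^{2}-4)^{2}-4=(v^{3}-4v-2)(v^{3}-4v+2)$. For $v>2$ the factor $v^{3}-4v+2$ is positive, so $D(v)\ge 0$ iff $v\ge v^{*}$, where $v^{*}\in(2.2,2.25)$ is the unique root $>2$ of $v^{3}-4v-2$; for $v\ge v^{*}$ both $\alpha_{\pm}(v)$ are positive, $\alpha_{-}(v^{*})=\alpha_{+}(v^{*})=\tfrac12\big((v^{*})^{3}-2v^{*}\big)=:\alpha^{*}$, and $\alpha_{\pm}(v)>v$ (indeed $\alpha_{-}(v)>v\iff(v^{3}-4v)^{2}>D(v)\iff 4>0$); since any root $v>2$ of $Q$ has $v\ge v^{*}$ and then $\alpha=\alpha_{\pm}(v)>v$, this gives the ``$v<\alpha$'' claim above. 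So $N(\alpha)$ is the number of $v\ge v^{*}$ with $\alpha_{-}(v)=\alpha$ plus the number with $\alpha_{+}(v)=\alpha$, and the theorem reduces to two monotonicity statements: \textbf{(A)} $\alpha_{+}$ is strictly increasing on $[v^{*},\infty)$ with range $[\alpha^{*},\infty)$ — immediate, since $v^{3}-2v$ and $\sqrt{D(v)}$ are both increasing there, using $D'(v)=2v(3v^{2}-4)(v^{2}-4)>0$; and \textbf{(B)} $\alpha_{-}$ is strictly decreasing on $[v^{*},v_{cr}]$ and strictly increasing on $[v_{cr},\infty)$ for a unique $v_{cr}>v^{*}$, with minimum $\alpha_{cr}:=\alpha_{-}(v_{cr})$. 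Granting (A)–(B), one reads off $N(\alpha)=0$ for $\alpha<\alpha_{cr}$; $N(\alpha_{cr})=1$ with $v_{cr}$ a double root of $Q$; and $N(\alpha)=2$ for $\alpha>\alpha_{cr}$ (both roots on the $\alpha_{-}$‑branch if $\alpha_{cr}<\alpha<\alpha^{*}$, one on each branch if $\alpha\ge\alpha^{*}$). Multiplying by $2$ gives exactly $0$, $2$, $4$ measures, so $\alpha_{cr}=\alpha_{-}(v_{cr})$ is the critical value; numerically $\alpha_{cr}\approx 2.65$ and $v_{cr}\approx 2.4$.

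The hard part is (B). Here I would compute $\alpha_{-}'(v)=\tfrac12\big[(3v^{2}-2)-\tfrac{D'(v)}{2\sqrt{D(v)}}\big]$ and note that for $v>v^{*}$ its sign is that of $4(3v^{2}-2)^{2}D(v)-D'(v)^{2}$; substituting $D'(v)=2v(3v^{2}-4)(v^{2}-4)$ and using $(3v^{2}-2)^{2}-(3v^{2}-4)^{2}=12(v^{2}-1)$, this equals $16\,T(v^{2})$ with
\[
T(w)=3w(w-4)^{2}(w-1)-(3w-2)^{2}=3w^{4}-27w^{3}+63w^{2}-36w-4 .
\]
Then I would show, by Descartes' rule plus the sign evaluations $T(-1)>0>T(0)$, $T(1)<0<T(2)$, $T(3)<0$, $T(4)<0$, $T(5)<0<T(6)$, that $T$ has four simple real roots, the three positive ones lying in $(1,2)$, $(2,3)$ and $(5.5,5.8)$; since $(v^{*})^{2}\in(4.84,5.07)$ lies strictly between the second and the third of these, $T$ has exactly one root $w_{cr}>(v^{*})^{2}$, i.e. $\alpha_{-}'$ has exactly one zero $v_{cr}=\sqrt{w_{cr}}$ in $(v^{*},\infty)$; and because $\alpha_{-}'(v)\to-\infty$ as $v\downarrow v^{*}$ (as $D(v^{*})=0<D'(v^{*})$) while $\alpha_{-}'(v)\to 1$ as $v\to\infty$, that zero is a strict minimum, which is (B). Everything else — the division giving $P_{8}$, the $v=u+u^{-1}$ reduction, the positivity bookkeeping, and (A) — is routine; the genuine work is clearing the radical and pinning down the root count of the auxiliary quartic $T$ (and, if one wants $\alpha_{cr}$ algebraically, recognising it as the relevant root of the $v$‑discriminant of $Q$). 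The same scheme, with a larger auxiliary polynomial, is what one would use for $k=6$.
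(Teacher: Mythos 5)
Your proposal is correct and follows essentially the same route as the paper's proof: factor out $u^2-1$ from (\ref{14}), substitute $\xi=u+1/u$ to get the quartic (\ref{17}), solve it as a quadratic in $\alpha$, and count roots $\xi>2$ on the two branches $\alpha_{\pm}$ (the paper's $\gamma_1,\gamma_2$), with each such $\xi$ contributing the reciprocal pair $\{u,1/u\}$ and hence two measures. The only differences are that you actually prove the unique-minimum property of the lower branch via the auxiliary quartic $T(w)$ and sign evaluations (the paper merely asserts this numerically at $\xi_1\approx 2.3841$), and you add the check $\xi<\alpha$ guaranteeing $z_2>0$, a point the paper passes over.
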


\begin{proof}
Let $k = 5$.

In this case equation (\ref{12}) has the form
\begin{equation}\label{14}
u^{10}-\alpha u^{9}+\alpha^2 u^{6}-\alpha^2 u^{4}+\alpha
u-1=0. \end{equation}

Now equation (\ref{13}) has the
form

\begin{equation}\label{15}
(u^2-1)\left(u^8-\alpha u^7+ u^6-\alpha
u^5+(\alpha^2+1) u^4-\alpha u^3+ u^2-\alpha
u+1\right)=0. \end{equation}
 From (\ref{15}) we have $ u^2-1=0$
or
\begin{equation}\label{16}
u^8-\alpha u^7+ u^6-\alpha
u^5+(\alpha^2+1) u^4-\alpha u^3+ u^2-\alpha
u+1=0.
\end{equation}
Since $u>0$,  we have that  $u = 1$ is the solution of equation (\ref{15}). We assume that $u\ne 1$. Setting $\xi= u + {1\over u} >
2$, from (\ref{16}) we obtain the equation
\begin{equation}\label{17}
 \xi^4 -\alpha \xi^3-3\xi^2+2\alpha\xi+\alpha^2+1=0. \end{equation}
 The equation (\ref{17}) has at most two positive
 solutions.
 From equation (\ref{17}) we find the parameter $\alpha$:

\begin{equation}\label{18}
\alpha_1=\frac{\xi^3-2\xi+\sqrt{\xi^6-8\xi^4+16\xi^2-4}}{2}:=\gamma_1(\xi),
\end{equation}
\begin{equation}\label{19}
\alpha_2=\frac{\xi^3-2\xi-\sqrt{\xi^6-8\xi^4+16\xi^2-4}}{2}:=\gamma_2(\xi).
\end{equation}

Assume $v=\xi^2$ and $\varphi(v)=v^3-8v^2+16v-4$. We consider
$\varphi'(v)=3v^2-16v+16.$ It is clear that $\varphi'(v)>0$ for
$v>4.$ On the other hand $\varphi(4)<0, \ \ \varphi(+\infty)>0.$
It follows that $\varphi(v)=0$ has a unique solution $v_0$ for
$v>4.$  Therefore, the system of inequalities
$$ \left\{%
\begin{array}{ll}
     \xi^6-8\xi^4+16\xi^2-4\geq 0\\
     \xi>2, \\
\end{array}%
\right.
$$
valid  for $\xi \in [\xi_0, +\infty),$ where
$\xi_0=\sqrt{v_0}(\approx 2.214)$.

Note that $\gamma_1(\xi_0)=\gamma_2(\xi_0)$.

One can check that \begin{equation}\label{20}\lim_{\xi\rightarrow
+\infty} \gamma_{i}(\xi)=+\infty, i=1,2.\end{equation}

It is clear that the function $\gamma_1(\xi)$ is increasing on the $[\xi_0, +\infty)$. Then we get following: for  $\alpha \in (0, \gamma_1(\xi_0))$ there is not $\xi>2$ satisfying the equation (\ref{17}); if $\alpha\in[\gamma_1(\xi_0),+\infty)$ then there exists a unique $\xi>2$ which satisfying the equation (\ref{17}).

Note that if $\xi\in [\xi_0, +\infty)$ then the equation
$\gamma'_2(\xi)=0$ has a unique solution, which is $\xi_1\approx
2.3841,$ and also we get $\gamma_2(\xi_0)\approx 3.21$,
$\gamma_2(\xi_1)\approx 2.65$. Denote $\alpha_{cr}=\gamma_2(\xi_1)$.

Hence it is evident that the function $\gamma_2(\xi)$ reaches its
minimum in $[\xi_0, +\infty)$ at $\xi_1$. Consequently, for
$\alpha \in (0, \gamma_2(\xi_1))$ there is not $\xi>2$ satisfying the equation (\ref{17}), for $\alpha \in \{\gamma_2(\xi_1)\}\cup(\gamma_2(\xi_0); +\infty)$ there exists a unique $\xi>2$ satisfying the equation (\ref{17}), if
$\alpha\in(\gamma_2(\xi_1),\gamma_2(\xi_0))$ then there exist two
$\xi>2$ satisfying the equation (\ref{18}).

Let $n_\alpha$ be the number of solutions of the equation
(\ref{17}). Then $n_\alpha$ has the following form
$$n_\alpha=\left\{%
\begin{array}{ll}
    0, \ \  \mbox{if} \ \ {\alpha\in(0, \alpha_{cr})} \\
    1,  \ \ \mbox{if} \ \  {\alpha=\alpha_{cr}} \\
    2,  \ \ \mbox{if} \ \  {\alpha\in(\alpha_{cr}, +\infty).} \\
\end{array}%
\right.      $$

For $\alpha\in (0, \alpha_{cr})$ from $u+\frac{1}{u}=\xi$ we get
four solutions of the equation (\ref{16}). In this case the
equation (\ref{15}) has five solutions. For $\alpha=\alpha_{cr}$
from $u+\frac{1}{u}=\xi$ we get that the equation (\ref{16}) has
two solutions. Consequently equation (\ref{15}) has three
solutions. In the case $\alpha>\alpha_{cr}$ the equation
(\ref{15}) has a unique solution $u=1$.
\end{proof}

\begin{theorem}\label{th4} Let $|A|=k, k=6.$ \\
For the weakly periodic Gibbs measures
corresponding to the set of quantities from  $I_3$ there exists a
critical value $\alpha_{c}(\approx 1,89)$ such that there is not
any $H_A-$ weakly periodic (nonperiodic)  Gibbs
measures for $\alpha\in(0, \alpha_{c})$, there are two $H_A-$ weakly
periodic (nonperiodic)  Gibbs measures for $\alpha\in[2, 3]\cup\{\alpha_{c}\}$,
 and there are four $H_A-$ weakly periodic (nonperiodic) Gibbs measures for $\alpha\in(\alpha_{c}, 2)\cup(3,+\infty)$.
 \end{theorem}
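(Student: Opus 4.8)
The plan is to follow the scheme of the proof of Theorem~\ref{th3}, now with $k=6$. Putting $k=6$ in (\ref{12}) gives
$$u^{12}-\alpha u^{11}+\alpha^{2}u^{7}-\alpha^{2}u^{5}+\alpha u-1=0,$$
and by the factorization (\ref{13}) this is $(u^{2}-1)P_{10}(u)=0$, where $P_{10}$, being the quotient of an antipalindromic polynomial of even degree by $u^{2}-1$, is palindromic:
$$P_{10}(u)=u^{10}-\alpha u^{9}+u^{8}-\alpha u^{7}+u^{6}+(\alpha^{2}-\alpha)u^{5}+u^{4}-\alpha u^{3}+u^{2}-\alpha u+1 .$$
The root $u=1$ corresponds to the translation-invariant measure and is discarded; for $u\neq1$ I set $\xi=u+\tfrac1u>2$, and dividing $P_{10}(u)$ by $u^{5}$ and using the identities expressing $u^{n}+u^{-n}$ through $\xi$ reduces $P_{10}(u)=0$ to the quintic
$$G(\xi):=\xi^{5}-\alpha\xi^{4}-4\xi^{3}+3\alpha\xi^{2}+3\xi+\alpha^{2}-\alpha=0 .$$

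Next I would read $G(\xi)=0$ as a quadratic in $\alpha$,
$$\alpha^{2}-(\xi^{4}-3\xi^{2}+1)\,\alpha+\xi(\xi^{2}-1)(\xi^{2}-3)=0,$$
with solutions $\alpha=\gamma_{1,2}(\xi)=\tfrac12\big[(\xi^{4}-3\xi^{2}+1)\pm\sqrt{D(\xi)}\,\big]$, where $D(\xi)=(\xi^{4}-3\xi^{2}+1)^{2}-4\xi(\xi^{2}-1)(\xi^{2}-3)$. As in Theorem~\ref{th3}, the number of fixed points of $W$ on $I_{3}$ for given $\alpha>0$ equals $2n_\alpha$, where $n_\alpha$ is the number of $\xi>2$ with $G(\xi)=0$, that is, the number of intersections of the horizontal line $\{\alpha\}$ with the union of the graphs of $\gamma_{1}$ and $\gamma_{2}$ over $(2,\infty)$. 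The decisive elementary computation is $G(2)=\alpha^{2}-5\alpha+6=(\alpha-2)(\alpha-3)$, equivalently $D(2)=1$, $\gamma_{1}(2)=3$, $\gamma_{2}(2)=2$: a solution branch can meet the boundary $\xi=2$ (where $u=1$) only at $\alpha=2$ and at $\alpha=3$, and this is exactly where the values $2$ and $3$ in the statement come from.

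It then remains to determine the shapes of the two branches on $[2,\infty)$. One checks that $\xi^{4}-3\xi^{2}+1>0$ and $\xi(\xi^{2}-1)(\xi^{2}-3)>0$ there, so $D>0$ and $0<\gamma_{2}<\gamma_{1}$ on all of $[2,\infty)$ (no branch point, no gap), and $\gamma_{1}\sim\xi^{4}$, $\gamma_{2}\sim\xi$ as $\xi\to+\infty$. The claim to establish is: $\gamma_{1}$ is strictly increasing on $[2,\infty)$ with $\gamma_{1}(2)=3$, while $\gamma_{2}(2)=2$, $\gamma_{2}$ has a unique interior critical point $\xi_{1}\approx2.1$, a minimum, with $\gamma_{2}(\xi_{1})=\alpha_{c}\approx1.89$, then returns to the level $2$ at a single $\xi_{2}>\xi_{1}$ and is strictly increasing on $[\xi_{1},+\infty)$. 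Granting this, counting intersections gives $n_\alpha=0$ for $\alpha\in(0,\alpha_{c})$, $n_\alpha=1$ for $\alpha\in\{\alpha_{c}\}\cup[2,3]$, and $n_\alpha=2$ for $\alpha\in(\alpha_{c},2)\cup(3,+\infty)$; since each such $\xi$ yields the pair $u,1/u$ of distinct positive roots of $u+\tfrac1u=\xi$, hence two distinct measures, the counts double and the theorem follows.

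The main obstacle is this last monotonicity analysis. In contrast with $k=5$, where $D$ was a cubic in $\xi^{2}$ whose sign was transparent, here $D$ is an honest degree-$8$ polynomial carrying the odd term $-4\xi^{5}$, so the reduction $v=\xi^{2}$ is unavailable. Proving $D>0$ on $[2,\infty)$, $\gamma_{1}'>0$ there, and — the delicate point — that $\gamma_{2}$ has exactly one critical point on $(2,\infty)$ will require squaring out the radical in $\gamma_{1,2}'=0$ to obtain a polynomial sign condition and then localizing its positive roots, supplemented by numerical (or interval-arithmetic) estimates to pin down $\xi_{1}$, $\alpha_{c}\approx1.89$ and $\xi_{2}$; in particular one must exclude any further oscillation of $\gamma_{2}$ beyond $\xi_{2}$ that would spuriously raise the count on $(3,+\infty)$ above four.
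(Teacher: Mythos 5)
Your proposal follows the paper's proof essentially verbatim: the same factorization, the same quintic in $\xi=u+\frac1u$, the same quadratic in $\alpha$ with two branches, the same identification of $\alpha=2,3$ as the values where a branch meets $\xi=2$ (i.e.\ $u=1$), and the same branch-shape claims --- which the paper likewise only asserts, locating the minimum of the lower branch numerically at $\xi_0\approx 2.077$ with $\alpha_c\approx 1.89$, so the ``main obstacle'' you defer is exactly what the paper also leaves unproved. One small point: the paper writes the discriminant as $D(\xi)=\xi(\xi^2-1)(\xi^2-3)(\xi-2)(\xi^2+2\xi+2)+1\geq 1$ for $\xi\geq 2$, which gives $D>0$ immediately, whereas your stated justification (positivity of the sum and product of the roots) does not by itself imply $D>0$.
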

\begin{proof}
Let $k=6$.

In this case (\ref{12}) has the form
\begin{equation}\label{20}
u^{12}-\alpha u^{11}+\alpha^2 u^{7}-\alpha^2 u^{5}+\alpha
u-1=0. \end{equation}

The function $y:=y(u)=u^{12}-\alpha u^{11}+\alpha^2 u^{7}-\alpha^2 u^{5}+\alpha
u-1$ with $\alpha=4.1$ is plotted in Fig. 1.

\begin{figure}
  \includegraphics[width=0.8\textwidth]{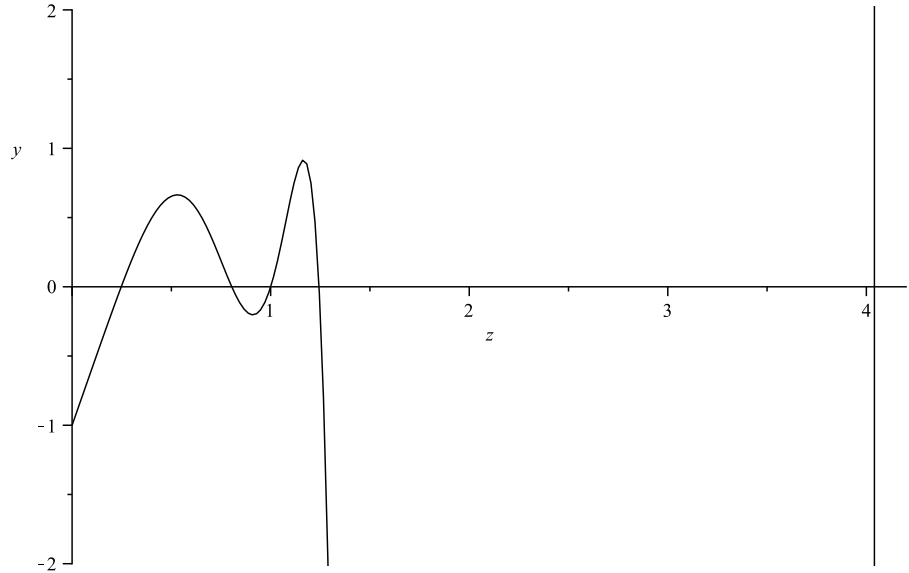}\\
  \caption{\textit{The function $y=y(u)$ with $\alpha=4.1$.}}\label{1}
\end{figure}

In this case, the equation (\ref{13}) has the
form

\begin{equation}\label{21}
(u^2-1)\left(u^{10}-\alpha u^9+ u^8-\alpha
u^7+u^6+(\alpha^2-\alpha) u^5+u^4-\alpha u^3+ u^2-\alpha
u+1\right)=0. \end{equation}
 From (\ref{21}) we have $ u^2-1=0$
or
\begin{equation}\label{22}
u^{10}-\alpha u^9+ u^8-\alpha
u^7+u^6+(\alpha^2-\alpha) u^5+u^4-\alpha u^3+ u^2-\alpha
u+1=0.
\end{equation}
Since $u>0$, we have that  $u = 1$ is the solution of equation (\ref{21}). We assume that $u\ne 1$. Setting $\xi= u + {1\over u} >
2$, from (\ref{22}) we obtain the equation
\begin{equation}\label{23}
 \xi^5 -\alpha \xi^4-4\xi^3+3\alpha\xi^2+3\xi+\alpha^2-\alpha=0. \end{equation}

 From equation (\ref{23}) we find the parameter $\alpha$:

\begin{equation}\label{24}
\alpha_1=\frac{\xi^4-3\xi^2+1-\sqrt{\xi(\xi^2-1)(\xi^2-3)(\xi-2)(\xi^2+2\xi+2)+1}}{2}:=\alpha_1(\xi),
\end{equation}
\begin{equation}\label{25}
\alpha_2=\frac{\xi^4-3\xi^2+1+\sqrt{\xi(\xi^2-1)(\xi^2-3)(\xi-2)(\xi^2+2\xi+2)+1}}{2}:=\alpha_2(\xi).
\end{equation}

One can check that \begin{equation*}\lim_{\xi\rightarrow
+\infty} \alpha_{i}(\xi)=+\infty, i=1,2,\end{equation*}
and $\xi(\xi^2-1)(\xi^2-3)(\xi-2)(\xi^2+2\xi+2)+1$ is positive for all $\xi \geq 2.$

Note that if $\xi\in [2, +\infty)$ the the equation $\alpha_1'(\xi)=0$ has a unique solution which is $\xi_0\approx 2,077$, and also we get $\alpha_1(\xi_0)\approx 1,89$. Hence it is clear that the function $\alpha_1(\xi)$ reaches its
minimum in $[2, +\infty)$ at $\xi_0$. Consequently, for
$\alpha \in (0, \alpha_1(\xi_0))$ there is not $\xi\geq 2$ satisfying the equation (\ref{23}), for $\alpha \in \{\alpha_1(\xi_0)\}\cup(\alpha_1(2); +\infty)$ there exist unique $\xi\geq 2$ satisfying the equation (\ref{23}),
$\alpha\in(\alpha_1(\xi_0),\alpha_1(2)]$ there exist two
$\xi\geq 2$ satisfying the equation (\ref{23}).

It is clear that the function $\alpha_2(\xi)$ is increasing on the $[2, +\infty)$. Then we get following: for  $\alpha \in (0, \alpha_2(2))$ there is not $\xi>2$ satisfying the equation (\ref{23}); $\alpha\in[\alpha_2(2),+\infty)$ there exist a unique $\xi>2$ which satisfying the equation (\ref{23}).

Denote $\alpha_1(\xi_0)=\alpha_c$.

Let $n_\alpha$ be the number of solutions of the equation
(\ref{22}). Then $n_\alpha$ has the following form
$$n_\alpha=\left\{%
\begin{array}{lll}
    0, \ \  \mbox{if} \ \ {\alpha\in(0, \alpha_{c})} \\
    1,  \ \ \mbox{if} \ \  {\alpha\in(\alpha_1(2), \alpha_2(2))\cup\{\alpha_{c}\}} \\
    2,  \ \ \mbox{if} \ \  {\alpha\in(\alpha_{c}, \alpha_1(2)]\cup[\alpha_2(2),+\infty).} \\
\end{array}%
\right.$$

For $\alpha\in (0, \alpha_{c})$  equation (\ref{21}) has a unique solution $u=1$. For $\alpha \in (2, 3)\cup \{\alpha_c\}$
from $u+\frac{1}{u}=\xi$ we get
two solutions of the equation (\ref{22}). In this case the
equation (\ref{21}) has five solutions. Note that $\alpha_1(2)=2$ and $\alpha_2(2)=3$ and in the case $\xi=2$ from $u+\frac{1}{u}=\xi$ we get $u=1.$ Consequently, for $\alpha=2$ and $\alpha=3$ qe get two solution of equation (\ref{22}) which different from $u=1.$ In this case equation (\ref{21}) has three solutions. For $\alpha \in (\alpha_c, \alpha_1(2))\cup (\alpha_2(2), +\infty)$ from $u+\frac{1}{u}=\xi$ we get that the equation (\ref{22}) has
two solutions. In this the equation
(\ref{21}) has a five solutions.

Let $N_\alpha$ be the number of solutions of the equation
(\ref{21}). Then $N_\alpha$ has the following form
$$N_\alpha=\left\{%
\begin{array}{lll}
    1, \ \  \mbox{if} \ \ {\alpha\in(0, \alpha_{c})} \\
    3,  \ \ \mbox{if} \ \  {\alpha\in[2, 3]\cup\{\alpha_{c}\}} \\
    5,  \ \ \mbox{if} \ \  {\alpha\in(\alpha_{c}, 2)\cup(3,+\infty).} \\
\end{array}%
\right.$$

\end{proof}

\section {DISCUSSION}

It is well known [16-19] the formulation of Ising model according to which every node $x$ of a lattice
corresponds to the two-valued variable $\sigma(x)$ with values $+1$ or $-1$.
If "objects" connected with nodes $x$ and $x'$ are in the same state, then $\sigma(x)\sigma(x')=+1$ ,but if they
are in different state, then $\sigma(x)\sigma(x')=-1$. Clear that in such an interpretation the definition of "object"
connected with the node $x$ can be discussed very widely. For example, it may be a magnetic moment
of an ion in a crystal having two directions [20] or it may be atoms of two kinds in a binary alloy [21]
(value $\sigma(x)=+1$ corresponds to an occupation of $x$-th node by an atom of one kind, and $\sigma(x)=-1$
when occupation take place by an atom of another kind). Others interpretations of the Ising model are
connected with an investigation of adsorption phenomenon on a surface [22], with DNA melting [23],
with the theory of latticed gas [24] and with other questions of theory of change of a phase of type order-
disorder.
For considered model on $\Gamma^5$ we find critical value $\alpha_{cr}(\approx 2,65)$ such that there are two $H_A-$ weakly
periodic (nonperiodic)  Gibbs measures for $\alpha=\alpha_{cr}$,
 and there are four $H_A-$ weakly periodic (non-periodic) Gibbs measures for $\alpha_{cr}<\alpha$.

On $\Gamma^6$ we find critical value $\alpha_{c}(\approx 1,89)$ such that there are two $H_A-$ weakly
periodic (non-periodic)  Gibbs measures for $\alpha\in[2, 3]\cup\{\alpha_{c}\}$,
 and there are four $H_A-$ weakly periodic (non-periodic) Gibbs measures for $\alpha\in(\alpha_{c}, 2)\cup(3,+\infty).$.

\textbf{Aknowledgments.} This research was supported by Ministry of Higher Education Malaysia (MOHE) under grant FRGS 14-116-0357.
Second author Rahmatullaev Muzaffar thanks IIUM for providing financial
support (grant FRGS 14-116-0357) and all facilities.


\begin{thebibliography}{99}


\bibitem{1} Rozikov U.A. \textit{Gibbs measures on Cayley trees} (World scientific.,2013).
\bibitem{2} Blekher, P. M., Ganikhodzhaev, N. N. \textit{On Pure Phases of the Ising Model on the Bethe Lattice,} Theory
Probab. Appl. \textbf{35}, No. 2, 216--227 (1990).

\bibitem{3} Zachary S.  \textit{Countable state space Markov random Felds and Markov chains on trees,} Ann. Prob. No. 11, 894--903 (1983).
\bibitem{3'} Rozikov U.A., Akin H., Uguz S. Exact solution of a generalized ANNNI model on a Cayley tree. {\it Math. Phys. Anal. Geom.} 2014. V.17, No.1, p.103-114.
\bibitem{3''} Ganikhodjaev N.N., Rozikov U.A. On Ising model with four competing interactions on Cayley tree. {\it Math. Phys. Anal. Geom.} 2009, V.12, No.2, p. 141-156.
\bibitem{4} Ganikhodzhaev N. N., Rozikov, U. A. \textit{Description of Extreme Periodic Gibbs Measures of Some Lattice
Models on Cayley Tree}, Theor. Math. Phys. \textbf{111}, No. 1, 480--486 (1997)
\bibitem{5}  Rozikov, U. A. \textit{Partition Structures of the Group Representation of the Cayley Tree into Cosets by FiniteIndex Normal Subgroups and Their Applications tothe Description of Periodic Gibbs Distributions,} Theor.
Math. Phys. \textbf{112}, No. 1, 929--933 (1997).
\bibitem{6} Rozikov, U. A. \textit{Construction of an Uncountable Number of Limiting Gibbs Measures in the Inhomogeneous
Ising Model,} Theor. Math. Phys. \textbf{118}, No. 1, 77--84 (1999).
\bibitem{7} Rozikov, U. A., Rahmatullaev, M. M. \textit{Description of Weakly Periodic Gibbs Measures for the Ising Model
on a Cayley Tree,} Theor. Math. Phys. \textbf{156}, No. 2, 1218--1227 (2008).
\bibitem{8} Rozikov, U. A., Rahmatullaev, M. M. \textit{Weakly Periodic Ground States and Gibbs Measures for the Ising
Model with Competing Interactions on the Cayley Tree,} Theor. Math. Phys. \textbf{160}, No. 3, 1292--1300 (2009).
\bibitem{9} Rozikov, U. A., Rahmatullaev, M. M. �On Weakly Periodic Gibbs Measures of Ising Model on a Cayley
Tree,� Dokl. AN RUz \textbf{4}, 12--15 (2008) [in Russian].
\bibitem{10}   Rahmatullaev, M. M. \textit{New Gibbs Measures of Ising Model on a Cayley Tree,} Uzb. Mat.
Zh. \textbf{2}, 144--152 (2009) [in Russian].
\bibitem{11} Ganikhodzhaev, N. N. \textit{Group Representation and Automorphisms on a Cayley Tree,} Dokl. AN RUz \textbf{4}, 3--5
(1994) [in Russian].
\bibitem{7'} Rahmatullaev, M. M.  \textit{New weakly periodic Gibbs measures of Ising
model on Cayley tree.} Russian Mathematics. \textbf{V. 11}, pp.54--63 (2015).

\bibitem{Pra} Prasolov V V 2004 \textit{Polinomials} (Berlin: Springer. Algorithms and Computation in Mathematics).

\bibitem{*1} Sinai, Ya. G.Theory of Phase Transitions. Rigorous Results(Nauka, Moscow, 1980).

\bibitem{*2} Preston, C.Gibbs States on Countable Sets(Cambridge Univ. Press, 1974).

\bibitem{*3} Malyshev, V. A., Minlos, R. A. \textit{Random Gibbs Fields}(Nauka, Moscow, 1985) [in Russian].

\bibitem{*4} Ruelle, D. \textit{Statistical Mechanics} (W. A. Benjamin, Inc., New York-Amsterdam, 1969; Mir, Moscow, 1971).

\bibitem{*15} de Jongh L.J., Miedema A.R. \textit{Experiments on simple magnetic model systems,} Adv. Phys. \textbf{23}(1),
1--260 (1974).

\bibitem{*16} Feynman R. \textit{Statistical Mechanics} ( The Perseus Books Group, 1998)

\bibitem{*17} Kasteleyn P.W. \textit{Phase transitions. In: Fundamental Problems in Statistical Mechanics.II. Proc. of 2nd NUFFIC Int.Summer Course(Noordwijk, The Netherlands, 1967)} (New York, 1968).

\bibitem{*18} Zimm B., Doty P., Iso K. \textit{Determination of the parameters for helix formations in poly-$\gamma$-benzil-L-glutamate,}
 Proc. Nat. Acad.Sci.USA, No. 45, 1601--1607 (1959).

\bibitem{*19} Huang K. \textit{Statistical Mechanics} (Wiley, 1987)




\end{thebibliography}
\end{document}